\keywords{lazy evaluation, pattern matching, operational semantics, interpreters, teaching}
\begin{document}

\title{Haskelite: A Tracing Interpreter Based on a Pattern-Matching Calculus}

\author{Pedro Vasconcelos}
\orcid{0000-0002-8387-9772}
\affiliation{%
  \institution{University of Porto}
  \city{Porto}
  \country{Portugal}
}
\email{pbvascon@fc.up.pt}

\author{Rodrigo Marques}
\orcid{0000-0003-2492-0197}
\affiliation{%
  \institution{University of Porto}
  \city{Porto}
  \country{Portugal}
}
\email{rodrigo.marques@fc.up.pt}

\begin{abstract}
  Many Haskell textbooks explain the evaluation of pure functional
  programs as a process of stepwise rewriting using equations.
  However, usual implementation techniques perform program
  transformations that make producing the corresponding tracing
  evaluations difficult.
  
  This paper presents a tracing interpreter for a subset of Haskell
  based on the pattern matching calculus of Kahl. We start from a
  big-step semantics in the style of Launchbury and develop a
  small-step semantics in the style of Sestoft's machines.  This
  machine is used in the implementation of a step-by-step educational
  interpreter. We also discuss some implementation decisions and
  present illustrative examples.
\end{abstract}


\maketitle

\begin{acks}
  This work was financially supported by the Artificial Intelligence
  and Computer Science Laboratory (LIACC) under the base research grant
  \grantnum{FCT}{UIDB/00027/2020}{}
  of the \grantsponsor{FCT}{Funda\c{c}\~{a}o para a Ci\^{e}ncia e
    Tecnologia}{https://www.fct.pt} funded by national funds
  through the FCT/MCTES (PIDDAC).
\end{acks}

\section{Introduction}\label{sec:intro}
Many commonly used Haskell textbooks explain the evaluation of pure
functional programs as a process of stepwise rewriting using
equations~\cite{bird_wadler_1988, thompson_2011, bird_2015,
  hutton_2016}.  These traces are presented as sequences of
expressions were each step follows from the previous one by a program
equation or some primitive operation.  However, practical
implementations such as GHC perform program transformations that lead
to executions that do not directly relate to the source program.
These transformations make it difficult to automatically generate
execution traces similar to the ones in textbooks.  We believe that
such traces can be helpful while learning functional programming,
particularly in a language with an unfamiliar evaluation strategy, but
there is little tooling to support this available to beginners.  To
fill this gap, we have developed \emph{Haskelite}, a web-based tracing
interpreter for a subset of Haskell. Our goal is to create an
educational tool that complements rather than replaces GHC/GHCi.

For example, consider the canonical definition  of
insertion into an ordered list:
\begin{verbatim}
insert x [] = [x]
insert x (y:ys) | x<=y = x:y:ys
                | otherwise = y:insert x ys
\end{verbatim}
Evaluating an expression such as \verb|insert 3 [1,2,4]| in Haskelite
produces the trace in Figure~\ref{fig:eval-insert}, where each
evaluation step is explained as either a primitive or the use of an
equation.\footnote{The ellipsis (\texttt{...}) represent a pending continuation,
  for example, for evaluating guards.}

\begin{figure}
\begin{Verbatim}[commandchars=\\\{\}]
   insert 3 [1, 2, 4] 
  \remark{\{ 3 <= 1 = False \}}
= .... False
  \remark{\{ insert x (y:ys) | otherwise = y:insert x ys \}}
= 1 : (insert 3 [2, 4])
  \remark{\{ 3 <= 2 = False \}}
= .... False
  \remark{\{ insert x (y:ys) | otherwise = y:insert x ys \}}
= 1 : (2 : (insert 3 [4]))
  \remark{\{ 3 <= 4 = True \}}
= .... True
  \remark{\{ insert x (y:ys) | x<=y = x:y:ys \}}
= 1 : (2 : (3  : (4 : [])))
  \remark{\{ final result \}}
= [1, 2, 3, 4]
\end{Verbatim}
\caption{Evaluation trace for \texttt{insert 3 [1,2,4]}.} \label{fig:eval-insert}
\end{figure}

\begin{figure}
\begin{Verbatim}[commandchars=\\\{\},commentchar=\%]
  head (isort [3, 2, 1])
  \remark{\{ isort = foldr insert [] \}}
= .... foldr insert [] [3, 2, 1]
  \remark{\{ foldr f z (x:xs) = f x (foldr f z xs) \}}
= .... insert 3 (foldr insert [] [2, 1])
  \remark{\{ foldr f z (x:xs) = f x (foldr f z xs) \}}
= ........ insert 2 (foldr insert [] [1])
  \remark{\{ foldr f z (x:xs) = f x (foldr f z xs) \}}
= ............ insert 1 (foldr insert [] [])
  \remark{\{ foldr f z [] = z \}}
= ................ []
  \remark{\{ insert x [] = [x] \}}
= ............ [1]
  \remark{\{ 2 <= 1 = False \}}
= ............ False
  \remark{\{ insert x (y:ys) | otherwise = y:insert x ys \}}
= ........ 1:(insert 2 [])
  \remark{\{ 3 <= 1 = False \}}
= ........ False
  \remark{\{ insert x (y:ys) | otherwise = y:insert x ys \}}
= .... 1:(insert 3 (insert 2 []))
  \remark{\{ head (x:_) = x \}}
= 1
\end{Verbatim}
  \caption{Evaluation trace for \texttt{head (isort [3,2,1])}.}\label{fig:ins-sort}
\end{figure}

Besides equational reasoning, Haskelite can also be used to illustrate
operational aspects of lazy evaluation: consider the list insertion
sort defined as a fold of the previous \texttt{insert} function:
\begin{verbatim}
isort = foldr insert []
\end{verbatim}
Figure~\ref{fig:ins-sort} shows the evaluation trace for the
composition of \texttt{head} with \texttt{isort}.  Because of lazy
evaluation, this computation does \emph{not} require sorting the entire list,
instead performing just $O(n)$ steps (essentially computing the
minimum).  We believe that the ability to easily show such evaluation
traces can be helpful for teaching and learning Haskell.

The implementation of Haskelite required the development of an
alternative operational model, namely, a lazy operational semantics
based on the pattern matching calculus of Kahl~\cite{kahl_2004}.  We
start with a big-step semantics in the style of
Launchbury~\cite{launchbury_1993} and develop a small-step semantics
in the style of Sestoft's machines~\cite{sestof_1997} that forms
the foundation for the Haskelite interpreter.

The contributions presented in this paper are:
\begin{enumerate}
\item the definition of \lambdaPMC, a small core language
  based on a pattern matching calculus;
\item a big-step operational semantics for \lambdaPMC ;
\item an abstract machine derived from the big-step semantics;
\item the description of an implementation of this machine for
  a tracing interpreter of a subset of Haskell.
\end{enumerate}

The remainder of this paper is structured as follows:
Section~\ref{sec:syntax} defines the syntax of \lambdaPMC\ and reviews
the pattern-matching calculus it is based
on. Section~\ref{sec:bigstep} defines normal forms and presents a
big-step semantics for language.  Section~\ref{sec:smallstep} defines
configurations and small-step reduction rules of an abstract machine
for \lambdaPMC\ and discusses the correspondence between the two
semantics.  Section~\ref{sec:implementation} describes the
implementation.  Section~\ref{sec:related} discusses related
work. Finally, Section~\ref{sec:conclusion} highlights directions for
further work.


\section{Syntax}\label{sec:syntax}

\subsection{Motivation}\label{sec:patterns}
Handling pattern matching with nested or overlapping patterns in a
strict language is straightforward because the rules for matching can
be separated from the rest of evaluation (see, for example, Chapter 8
of~\cite{ramsey_2022}).  However, in a lazy language, \emph{pattern
  matching forces evaluation} and hence matching and evaluation must
be intertwined.

Established implementation techniques for pattern matching in lazy
languages solve this problem by translating nested patterns into
\emph{simple cases}~\cite{augustsson_1985,spj_1987,jones_1992}, i.e.\@
case expressions with patterns consisting of a single constructor
applied to variables.  Evaluating simple cases requires the evaluation
of the scrutinized expression only to the outermost constructor
(i.e.\@ weak head normal form).  Consider, for example, the following
Haskell function, which checks whether a list has fewer than two
elements:
\begin{verbatim}
isShort (x:y:ys) = False
isShort ys       = True
\end{verbatim}
The translation into simple cases is:
\begin{verbatim}
isShort xs = case xs of
                (x:xs') -> case xs' of
                              (y:ys) -> False
                              [] -> True
                [] -> True
\end{verbatim}
Multiple equations and nested patterns such as \verb|(x:y:ys)| must be
translated into nested case expressions with simple patterns and
matches are completed by introducing missing constructors.

By contrast, the translation of \textit{isShort} into \lambdaPMC\
is direct:
\[ 
  \textit{isShort} = \lambda (\matchpat{(x:y:ys)}{\matchreturn{\textsf{False}}} \mid
  \matchpat{ys}{\matchreturn{\textsf{True}}}) 
\]
Compared to the version with simple cases, the \lambdaPMC\ version
preserves a closer relation to the original source program: each
equation corresponds to one alternative in a \emph{matching
  abstraction} and nested patterns are preserved.  The downside is
that evaluation becomes more complex, because it may have to force
sub-expressions while matching and keep track of alternatives to try in case
of failure.  We claim that the trade-off is beneficial in an 
interpreter where performance is not critical.

\begin{figure}
  \[
    \begin{array}{rcll}
      e & &     \text{expressions} \\
      e & ::= &  x  & \text{variable} \\
      & | &  \app{e_1}{e_2} & \text{function application} \\
      & | & \abstr{m} & \text{matching abstraction} \\
      & | &  \constr{c}{e_1,\ldots,e_n} & \text{constructor application} \\[2ex]
      m &  & \text{matchings} \\
      m & ::= & \matchreturn{e} & \text{return expression}\\
      & | & \matchfail & \text{failure} \\
      & | & \matchpat{p}{m} & \text{match pattern} \\
      & | & \matcharg{e}{m} & \text{argument supply}\\
        & | & \matchalt{m_1}{m_2} & \text{alternative}\\
      &|& \matchwhere{m}{binds} & \text{local bindings} \\[2ex]
      binds && \text{bindings} \\
      binds &::=& \{ x_1=e_1; \ldots; x_n = e_n\} \\[2ex]
      p & & \text{patterns}\\
      p & ::= & x & \text{variable pattern} \\
      & | & \constr{c}{p_1,\ldots,p_n} & \text{constructor pattern} 
  \end{array}
\]
\caption{Syntax of \lambdaPMC.}\label{fig:syntax}
\end{figure}

\subsection{Expressions and Matchings}
Figure~\ref{fig:syntax} defines \lambdaPMC, a small functional
language with syntactical categories for \emph{expressions},
\emph{matchings} and \emph{patterns}. Matchings and patterns are based
on the PMC calculus of Kahl~\cite{kahl_2004}.  The principal
syntactical extension is a \awhere\ construct for defining (possibly)
recursive local bindings.

A matching can be a return expression $\matchreturn{e}$ (signaling a
successful match), the matching failure \matchfail, a pattern match
$\matchpat{p}{m}$ (expecting a single argument that must match pattern
$p$), an argument application $\matcharg{e}{m}$, or the alternative
\matchalt{m_1}{m_2} between two matchings.  Note that patterns $p$ can
be nested; for example \verb|(x:(y:ys))| is a valid pattern (using
infix notation for the list constructor).

Matching abstraction subsumes lambda abstraction:
$\lambda x.\, e$ is equivalent to
$\lambda (\matchpat{x}{\matchreturn{e}})$.

Like Launchbury's semantics, we introduce a syntactical construct to
define local bindings; this is needed to implement lazy evaluation by
guaranteeing that evaluation results are shared (see
Section~\ref{sec:normalized}).  Unlike Lauchbury's semantics, local
bindings are introduced in \emph{matchings} rather than expressions.
By analogy with the Haskell language, we use ``\awhere'' instead of
``\alet'' because the scope of the bindings scope extends beyond a
single expression.

The expression-level \alet\ can be obtained by a straightforward
translation into a \awhere:
 \begin{equation}
   \llet{binds}{e} ~\equiv~
   \lambda(\matchwhere{\matchreturn{e}}{binds})
 \end{equation}
However, due to the potential for matching failure, the reverse translation is
 \emph{not} direct. In other words, where-bindings cannot be readily
 translated into a \alet\ (see the final example in Section~\ref{sec:examples}).
 
Case expressions can also be obtained by
translation into matching abstractions:
\begin{equation}
  \begin{split}
  \textsf{case}~ e_0 ~\textsf{of}~\{p_1\to e_1;\ldots; p_n\to e_n\} \\
 ~\equiv~ 
 \lambda(\matcharg{e_0}{(\matchpat{p_1}{\matchreturn{e_1}}\amatchalt\ldots \amatchalt\matchpat{p_n}{\matchreturn{e_n}})})
 \end{split}
\end{equation}
Note that the above translation is valid even if the patterns
$p_1,\ldots\, p_n$ are nested, overlapping or non-exhaustive.
Also note that we can omit parenthesis because $\amatchalt$ is associative.

As in the STG machine~\cite{jones_1992}, we assume that constructors
are always \emph{saturated}; partial application of constructors can be expressed using a
matching abstraction. For example, the curried list constructor is
\begin{equation}
 (\texttt{:}) ~\equiv~ \lambda(\matchpat{x}{\matchpat{y}\matchreturn{\texttt{:}(x,y)}})
\end{equation}

For readability, we will write list constructors using infix notation,
e.g.\@ use $x\texttt{:}y$ instead of $\texttt{:}(x,y)$.  We will also
present some examples using integers and arithmetic operations. These
are straightforward to define and therefore omitted from the syntax
of Figure~\ref{fig:syntax} and the
semantics of Sections~\ref{sec:bigstep} and~\ref{sec:smallstep} to avoid
unnecessary complexity.

\subsection{Reduction Relations}\label{sec:reduction}
The pattern matching calculus is defined as two \emph{redex}
reduction relations, namely, $\expred$ between expressions and
$\matred$ between matchings. Kahl describes the calculus and proved
its confluence irrespective of reduction
strategies~\cite{kahl_2004}. In Section~\ref{sec:bigstep} we will
define a call-by-need (i.e.\@ lazy) reduction strategy, but for now we
will just review the reduction rules that apply to our setting.

We use the notation $m[e'/x]$ for the substitution of free occurrences
of variable $x$ for an expression $e'$ in the matching $m$.
The definitions of free occurrences and substitution are standard and
therefore not included here.

The first two rules define failure as the left unit of alternative,
and return as the left zero:
\begin{align}
  \matchalt{\matchfail}{m} &~\matred~ m \tag{$\matchfail\amatchalt$}  \\
  \matchalt{\matchreturn{e}}{m} &~\matred~ \matchreturn{e} \tag{$\matchreturn{}\amatchalt$}
\end{align}

The next rule states that matching an abstraction built from a return
expression reduces to the underlying expression.
\begin{equation}
  \lambda \matchreturn{e} ~\expred~ e  \tag{$\lambda\matchreturn{}$} 
\end{equation}

There is no reduction rule for $\lambda \matchfail$: this is because
matching failure corresponds to a runtime error that cannot be caught
in \lambdaPMC.\footnote{Kahl considers a variant of his pattern
  calculus where $\lambda \matchfail$ reduces to an ``empty
  expression'' that can be converted back to matching failure.  We do
  not require this for our semantics.}
  
Application of a matching abstraction reduces to argument supply to
the matching.  Dually, argument supply to a return expression reduces
to application of the
expression.\footnote{Rule $(\amatcharg\matchreturn{})$ is reasonable
  in a higher-order language because the returned expression can be a
  function.}
\begin{align}
  (\lambda m)~ a &~\expred~ \lambda(\matcharg{a}{m}) \tag{$\lambda @$}\\
  \matcharg{a}{\matchreturn{e}} &~\matred~ \matchreturn{e~a} \tag{$\amatcharg\matchreturn{}$} 
\end{align}

The following rule propagates a matching failure through an argument supply.
\begin{equation}
  \matcharg{e}{\matchfail} ~\matred~ \matchfail \tag{$\amatcharg\matchfail$}
\end{equation}

Next, argument supply distributes through alternatives.
\begin{equation}
  \matcharg{e}{(\matchalt{m_1}{m_2})} ~\matred~
  \matchalt{(\matcharg{e}{m_1})}{(\matcharg{e}{m_2})} \tag{$\amatcharg\amatchalt$}
\end{equation}
Note that the above rule duplicates the expression $e$; in the lazy
semantics of Section~\ref{sec:bigstep} we will restrict $e$ to be a
variable so that the result of evaluation can be shared
(thus preventing duplication of work).
For now, however, we are concerned only with the
pure calculus regardless of a particular reduction strategy.

The following rules handle argument supply to patterns; the successful
match ($c\amatcharg c$) decomposes the arguments into the nested
patterns, while the non-successful match ($c\amatcharg c'$) fails
immediately. Finally, matching a variable succeeds and performs substitution.
\begin{gather*}
 \begin{split} \matcharg{\constr{c}{e_1,\ldots,e_n}}{\matchpat{\constr{c}{p_1,\ldots,p_n}}{m}}      
   ~\matred \\ e_1 \amatcharg p_1 \amatchpat \ldots \amatchpat e_n \amatcharg p_n \amatchpat m
 \end{split} \tag {$c \amatcharg c$}\\[1ex]
 \begin{split}
  \matcharg{\constr{c'}{e_1,\ldots,e_k}}{\matchpat{\constr{c}{p_1,\ldots,p_n}}{m}} 
  ~ \matred ~ \matchfail \\ \text{if}~c\neq c'
  \end{split}
\tag{$c \amatcharg c'$} \\[1ex]
 \matcharg{e}{\matchpat{x}{m}} ~\matred~ m[e/x] \tag{$\amatcharg x$}
\end{gather*}

\emph{A clarification on notation}: because of the syntax of
Figure~\ref{fig:syntax}, a matching such as
$$ e_1 \amatcharg p_1 \amatchpat \ldots \amatchpat e_n \amatcharg p_n 
\amatchpat m $$
can only be interpreted as associating to the right:
$$ e_1 \amatcharg (p_1 \amatchpat (e_2 \amatchpat (p_2 \ldots \amatchpat e_n \amatcharg (p_n
\amatchpat m)\ldots)) $$
We therefore omit the redundant parentheses in the rule above and in the
remaining presentation.

\subsection{Translating Haskell into \lambdaPMC}\label{sec:examples}
We will now show some examples of translating a subset of Haskell
definitions into \lambdaPMC.  The emphasis is on developing intuitions
rather than presenting formal rules (this would also 
require a formal definition of Haskell's syntax which is beyond the
scope of this paper). The translation is automatic and can be
done during parsing (see Section~\ref{sec:implementation}).

\paragraph{Overlapping patterns}
Consider the following definition of the \emph{zipWith} function that
combines two lists using a functional argument:
\begin{verbatim}
zipWith f (x:xs) (y:ys) = f x y : zipWith f xs ys
zipWith f xs     ys     = []
\end{verbatim}
This definition  translates directly to \lambdaPMC:
\[
  \begin{array}{l}
  zipWith = \\
  \quad\lambda \!\!\begin{array}[t]{l}
        (f \amatchpat (x:xs) \amatchpat (y:ys) \amatchpat 
        \matchreturn{f~x~y : zipWith~ f~ xs~ ys}  \\
    \amatchalt f \amatchpat xs  \amatchpat ys \amatchpat \matchreturn{[\,]})
              \end{array}
              \end{array}
\]
Overlapping patterns are handled directly because matching alternatives are
tried in left-to-right order; see rules $(\matchfail\amatchalt)$ and
$(\matchreturn{}\amatchalt)$ of Section~\ref{sec:reduction}.

\paragraph{As-patterns and Boolean guards}
Next consider a function \emph{nodups}, which removes identical
consecutive elements from a list, expressed using an
\emph{as-pattern} and a \emph{Boolean guard}:
\begin{verbatim}
nodups (x:xs@(y:ys)) | x==y = nodups xs
nodups (x:xs) = x:nodups xs
nodups [] = []
\end{verbatim}
This can be translated to \lambdaPMC\
as a matching abstraction with nested alternatives:
\[
  \begin{array}{l}
    nodups = 
    \lambda
  \!\!\begin{array}[t]{l}
    ((x:xs) \amatchpat
        xs \amatcharg (y:ys) \amatchpat \\
         \qquad (x==y) \amatcharg \textsf{True} \amatchpat   \matchreturn{nodups~ xs}  \\
    \amatchalt (x:xs) \amatchpat  \matchreturn{x: nodups~xs} \\
    \amatchalt [\,] \amatchpat \matchreturn{[\,]})
      \end{array}
    \end{array}
\]
Following Kahl~\cite{kahl_2004}, the Boolean guard is encoded as the
evaluation of the expression $(x==y)$ matched against the constructor
\textsf{True}.

Note that \lambdaPMC\ could also be used to encode more advanced
pattern matching extensions, such as \emph{pattern guards}, 
\emph{view patterns} and \emph{lambda cases}. However,
we do not require this for our interpreter, so we omit the details
from this presentation.

\paragraph{Local bindings}
Our final example illustrates the use of local bindings
in \lambdaPMC. Consider the following (contrived) Haskell function: 
\begin{verbatim}
foo x y 
    | z>0 = z+1
    | z<0 = z-1
    where z = x*y
foo x y   = x+y
\end{verbatim}
Note that the binding for $z$ scopes over both Boolean guards and
corresponding right-hand sides but not over the last equation. Also,
note that the guards are not exhaustive: the last equation applies
when $z=0$.

We can translate this into \lambdaPMC\ preserving the scoping and the
matching semantics using a \awhere\ binding:
\[ \textit{foo} = \lambda
  \!\!\begin{array}[t]{l}
     (x \amatchpat y \amatchpat
     \!\!\begin{array}[t]{l}
       ( (z > 0 \amatcharg \textsf{True} \amatchpat \matchreturn{z+1})\\
       \amatchalt (z < 0 \amatcharg \textsf{True} \amatchpat \matchreturn{z-1}))\\
       \awhere\ z=x*y
     \end{array}\\
     \amatchalt x \amatchpat y\amatchpat \matchreturn{x+y})
   \end{array}\]
 
 At first glance, it may appear that one could have also expressed
 this using \alet:
\[ \textit{foo}' = \lambda
  \!\!\begin{array}[t]{l}
        (x \amatchpat y \amatchpat
        \lreturn \alet~ z = x*y ~ \\
        \qquad\qquad\quad\ain~\lambda
        \!\!\begin{array}[t]{l}
          (z>0 \amatcharg \textsf{True} \amatchpat
          \matchreturn{z+1})\\
          \amatchalt z<0 \amatcharg \textsf{True} \amatchpat \matchreturn{z-1})\rreturn
        \end{array} \\
     \amatchalt x \amatchpat y\amatchpat \matchreturn{x+y})
      \end{array}\]
However, $\textit{foo}'$ is not equivalent to $\textit{foo}$: when $z=0$ then
the inner lambda matching fails and the second equation is not
tried (meaning $\textit{foo}'$ is undefined for $x, y$ such that $x*y=0$).
This is because a return expression such as
$\matchreturn{\llet{z=x*y}{\ldots}}$ is a committed
choice, absorbing any remaining alternatives.\footnote{See rule
  $(\matchreturn{}\amatchalt)$ in Section~\ref{sec:reduction}.}

\section{Big-step Semantics}\label{sec:bigstep}

In this section, we define a big-step semantics for
\lambdaPMC.  The semantics is based in Sestoft's
revision~\cite{sestof_1997} of Launchbury's semantics for lazy
evaluation~\cite{launchbury_1993}. The definitions of
\emph{matching arity} and \emph{weak head normal forms} in Section~\ref{sec:whnf}
are novel, as are the rules for matchings.

\subsection{Normalized Syntax}\label{sec:normalized}
As in Launchbury's semantics, we will restrict arguments of
applications to be simple variables rather than arbitrary expressions;
this is needed to implement call-by-need by updating a heap with
results of evaluations.  We need to restrict arguments of applications
not just in expressions but also in matchings\footnote{Rule
  $(\amatcharg\amatchalt)$ of Section~\ref{sec:reduction} distributes
  a argument matching over alternatives; by restricting the argument to be a
  variable we ensure that any evaluation result is shared.}:
\[
    \begin{array}{rcl}
      e & ::= & \cdots\quad  |\quad  \app{e}{y} \quad|\quad
                \constr{c}{y_1,\ldots,y_n} \quad|\quad
                \cdots  \\[1ex]
      m & ::= & \cdots\quad |\quad \matcharg{y}{m} \quad|\quad \cdots 
    \end{array}
  \]
It is always possible to translate arbitrary \lambdaPMC\ terms into
these forms by using \alet\ and \awhere\ bindings:
\begin{align*}
  \constr{c}{e_1,\ldots,e_n} &~\leadsto~
                               \llet{\{y_i = e_i\}}{\constr{c}{y_1,\ldots,y_n}}\\
  \app{e_1}{e_2} & ~\leadsto~
                   \llet{y=e_2}{(\app{e_1}{y})} \\
  \matcharg{e}{m} & ~\leadsto~ \matchwhere{(\matcharg{y}{m})}{y=e}
\end{align*}
where the variables $y_i$ are suitably ``fresh''.

Note that normalized syntax still allows nested constructors in patterns;
indeed, the whole purpose of the semantics presented here is to treat
such matchings directly.

\subsection{Preliminary Definitions}\label{sec:whnf}

Let \emph{heaps} $\Gamma, \Delta, \Theta$ be finite mappings from
variables to (possibly unevaluated) expressions.  The notation
$\Gamma[\loc \mapsto e]$ represents a heap that maps variable $\loc$ to $e$
and otherwise behaves as $\Gamma$.

Because matchings can encode multi-argument functions and also
argument applications, we have to define the results of evaluations
accordingly. First we define a syntactical measure \arity{m} of the \emph{arity}
of a matching:
\[
  \begin{array}{rcl}
    \arity{\matchreturn{e}}         &=& 0 \\
    \arity{\matchfail}              &=& 0\\
    \arity{(\matchpat{p}{m})}       &=& 1 + \arity{m} \\
    \arity{(\matcharg{y}{m})}       &=& \max(0, \arity{m} - 1) \\
    \arity{(\matchalt{m_1}{m_2})}   &=& \arity{m_1}\,, \qquad \text{if}~ \arity{m_1} = \arity{m_2} \\
                                    & & \hfill \text{(otherwise undefined)}\\
    \arity{(\matchwhere{m}{binds})} &=& \arity{m}
  \end{array}
\]
The penultimate equation states that matchings in alternatives must
have equal arities; this generalizes the Haskell syntax rule requiring
that equations for a binding should have the same number of
arguments~\cite{haskell_2010_report}.

An expression is in weak head normal (whnf) $\whnf$ if it is either
a matching abstraction of arity greater than zero or a constructor:
\[
\begin{array}{rcll}
  \whnf  &::=&  \lambda m & \text{such that}~ \arity{m}> 0 \\
     &|& \constr{c}{y_1,\ldots,y_n}
\end{array}
\]
If $\arity{m}>0$ then $m$ expects at least one argument, i.e.\@
behaves like a lambda abstraction.  If $\arity m= 0$ then $m$ is
\emph{saturated} (i.e.\@ fully applied) and therefore \emph{not} a
whnf.  This definition of weak normal forms implies that partially
applied matchings will be left unevaluated e.g.\@
$\lambda(\matcharg{z}{\matchpat{x}{\matchpat{y}{\matchreturn{e}}}})$
is in whnf. This is similar to how partial applications are handled in
abstract machines with multi-argument functions e.g.\@ the
STG~\cite{jones_1992}.  However, in our case, this is not done for
efficiency, but rather to ensure that we only evaluate \emph{saturated}
matchings. The results $\matchresult$ of evaluating such matchings are
either a return expression or a failure:
\[
  \matchresult \quad::=\quad \matchreturn{e} \quad|\quad \matchfail
\]

\begin{figure}
  \begin{gather*}
    %
    %
    \prooftree
    \justifies
     \Gamma; \lset; w \expev \Gamma; w 
    \using{\bigrule{Whnf}}
    \endprooftree\\[3ex]
    \prooftree
     \arity{m}= 0 \qquad
    \Gamma; \lset; [\,];\, m \matev \Delta;\matchreturn{e} \qquad
    \Delta; \lset; e \expev \Theta;\,w 
    \justifies
     \Gamma;\lset; \lambda m \expev \Theta;w 
    \using{\bigrule{Sat}}
    \endprooftree\\[3ex]
     \prooftree
      \Gamma;\lset\cup\{\loc\}; e \expev \Delta; w 
     \justifies
      \Gamma[\loc\mapsto e]; \lset; \loc \expev \Delta[\loc\mapsto w]; w 
     \using{\bigrule{Var}}
     \endprooftree \\[3ex]
     \prooftree
      \Gamma;\lset; e\expev \Delta;\lambda m 
     \qquad
      \Delta;\lset; \lambda (\matcharg{\loc}{m}) \expev \Theta;w 
     \justifies
      \Gamma;\lset; (e~\loc) \expev \Theta; w 
     \using{\bigrule{App}}
    \endprooftree 
     %
     %
  \end{gather*}
  \caption{Expression evaluation}\label{fig:expr-eval}
\end{figure}

\begin{figure}
  \begin{gather*}
    %
    %
    \prooftree
    \justifies
    \Gamma; \lset; \argstack; \matchreturn{e} \matev \Gamma; \matchreturn{\matcharg{\argstack}{e}}
    \using{\bigrule{Return}}
    \endprooftree \\[3ex] 
    \prooftree
    \justifies
    \Gamma; \lset; A; \matchfail \matev \Gamma;\matchfail
    \using{\bigrule{Fail}}
    \endprooftree\\[3ex]
    \prooftree
    \Gamma; \lset; (y:\argstack); m \matev \Delta;\matchresult
    \justifies
    \Gamma; \lset; \argstack; \matcharg{y}{m} \matev \Delta; \matchresult
    \using{\bigrule{Arg}}
    \endprooftree\\[3ex]
    \prooftree
    \Gamma;\lset; \argstack; m[y/x] \matev \Delta; \matchresult
    \justifies
    \Gamma; \lset; (y:\argstack); \matchpat{x}{m} \matev \Delta; \matchresult
    \using{\bigrule{Bind}}
    \endprooftree \\[3ex]
    \prooftree
    \begin{array}{l}
    \Gamma;\lset; y\expev \Delta; \constr{c}{y_1,\ldots,y_n} \\
    \Delta;\lset;\argstack; \matcharg{y_1}{p_1} \amatchpat \ldots \amatchpat \matcharg{y_n}{p_n} \amatchpat m
      \matev \Theta;\matchresult
    \end{array}
    \justifies
    \Gamma; \lset; (y:\argstack); \matchpat{\constr{c}{p_1,\ldots,p_n}}{m}
    \matev \Theta;\matchresult
    \using{\bigrule{Cons1}}
    \endprooftree \\[3ex]
    \prooftree
    \Gamma;\lset; y\expev \Delta; \constr{c'}{y_1,\ldots, y_k} \qquad
    c \neq c' 
    \justifies
    \Gamma; \lset; (y:\argstack); \matchpat{\constr{c}{p_1,\ldots, p_n}}{m}
    \matev \Delta; \matchfail
    \using{\bigrule{Cons2}}
    \endprooftree    \\[3ex]
    \prooftree
    \Gamma;\lset;\argstack; m_1 \matev \Delta;\matchreturn{e}
    \justifies
    \Gamma;\lset;\argstack; (\matchalt{m_1}{m_2}) \matev \Delta;\matchreturn{e}
    \using{\bigrule{Alt1}}
    \endprooftree \\[3ex]
    \prooftree
    \Gamma;\lset;\argstack; m_1 \matev \Delta;\matchfail  \qquad
    \Delta;\lset;\argstack; m_2 \matev \Theta; \matchresult
    \justifies
    \Gamma;\lset;\argstack; (\matchalt{m_1}{m_2}) \matev \Theta;\matchresult
    \using{\bigrule{Alt2}}
    \endprooftree \\[3ex]
    \prooftree
    \Gamma[y_i\mapsto \widehat{e}_i];\lset;\argstack; \widehat{m} \matev
    \Delta; \matchresult
    \justifies
    \Gamma;\lset;\argstack; (\matchwhere{m}{\{x_i=e_i\}}) \matev
    \Delta; \matchresult
    \using{\bigrule{Where}}
    \endprooftree \quad (y_i~\text{are fresh})
  \end{gather*}
  
  \caption{Matching evaluation}\label{fig:match-eval}
\end{figure}

\subsection{Evaluation Rules}

Evaluation is defined in Figures~\ref{fig:expr-eval}
and~\ref{fig:match-eval} by two mutually recursive judgments:
\begin{description}
  \item[$\Gamma;\lset;e \expev \Delta;w$]  Evaluating 
   expression $e$ from heap $\Gamma$ yields heap $\Delta$ and result $w$;
  \item[$\Gamma;\lset;\argstack; m \matev \Delta;\matchresult$] 
    Evaluating matching $m$ from heap $\Gamma$ and argument stack \argstack\
    yields heap $\Delta$ and result \matchresult.
  \end{description}

  In the $\matev$ judgments the argument stack \argstack\ is
  a sequence of variables representing the pending arguments
  to be applied to the matching expression.

  In both judgments the set $\lset$ keeps track of variables under
  evaluation and is used to ensure freshness of variables in the
  \bigrule{Where} rule.
  \begin{definition}[Freshness condition, adapted from~\cite{sestof_1997}]
    A variable $y$ is fresh with respect to
    a judgment 
    $\Gamma;\lset;\argstack;m \matev \Delta;\matchresult$
    if it does not occur (free or bound) in
     $\Gamma,\lset,\argstack$ or $m$.
  \end{definition}

\paragraph{Remarks about rules for  expressions (Figure~\ref{fig:expr-eval}).}
  
Rule \bigrule{Whnf} terminates evaluation immediately when we reach a
weak head normal form, namely a non-saturated matching abstraction or a
constructor.
  
Rule~\bigrule{Sat} applies to saturated matching abstractions; if the
matching evaluation succeeds then we proceed to evaluate the
returned expression.
    
Rule~\bigrule{Var} forces the evaluation of an expression in the heap;
as in Launchbury and Sestoft's semantics, we remove the entry from the
heap during evaluation (``black-holing'') and update the heap with the
result afterwards.
    
Rule~\bigrule{App} first evaluates the function to obtain a matching
abstraction and then evaluates the argument application.
      
  \paragraph{Remarks about rules for matchings (Figure~\ref{fig:match-eval}).}

  Rule \bigrule{Return} terminates evaluation successfully
  when we reach a return expression.
  The notation $A\amatcharg e$ represents the nested applications
  of  left over arguments on the stack $A$ to the expression $e$.
  The definition is as follows:
  \begin{align*}
  \matcharg{[\,]}{e} &= e \\
  \matcharg{(y:ys)}{e}  &= \matcharg{ys}{(e~y)}
  \end{align*}

  Rule \bigrule{Fail} terminates evaluation unsuccessfully when
  we reach the match failure \matchfail.

  Rule~\bigrule{Arg} pushes an argument onto the stack and carries
  on evaluation.
  
  Rule~\bigrule{Bind} binds a variable pattern to
  an argument on the stack. This is simply a renaming of the pattern
  variable $x$ to the heap variable $y$.
  
  Rules~\bigrule{Cons1} and \bigrule{Cons2} handle the successful
  and unsuccessful matching of a constructor pattern; in the later
  case the matching evaluation returns \matchfail.  Note also that
  \bigrule{Cons1} decomposes the matching of sub-patterns
  in left-to-right order (i.e.\@ first matching $p_1$, then $p_2$, etc.).

  Rules~\bigrule{Alt1} and \bigrule{Alt2} handle progress and failure
  in alternative matchings.  Note that in \bigrule{Alt2} evaluation
  continues with the updated heap $\Delta$ because the effects of evaluation
  of the failed match must be preserved when evaluating $m_2$.
  Note also that the argument stack is shared between $m_1$ and $m_2$
  implementing rule $(\amatcharg\mid)$ of Section~\ref{sec:reduction}.
  This sharing justifies why we restrict argument supply to single
  variables: allowing arbitrary expressions as arguments could
  duplicate computations because the same expression could be evaluated
  in more than one alternative. By restricting to single variables
  we can ensure that results are shared.

  Rule~\bigrule{Where} is analogous to the one by Sestoft for
  \alet: it allocates new unevaluated expressions in the heap (taking
  care of renaming) and continues evaluation of the body.
  We use $\widehat{e_i}$ and $\widehat{m}$ for the renaming of
  bound expressions and body, respectively:
  \begin{gather*}
    \widehat{e_i} = e_i [y_1/x_1,\ldots,y_n/x_n] \\
    \widehat{m} = m [y_1/x_1,\ldots,y_n/x_n] 
  \end{gather*}


\section{Abstract Machine}\label{sec:smallstep}

We will now transform the big-step semantics of
Section~\ref{sec:bigstep} into an abstract
machine, i.e.\@ a transition function between configurations.
Each transition performs a bounded amount of work, which suits
the implementation of a step-by-step interpreter.  The
presentation follows the derivation done by
Sestoft~\cite{sestof_1997}.

\subsection{Configurations}
Because our big-step semantics has two mutually recursive judgments,
we introduce a \emph{control} component that keeps
track of the current evaluation mode.
\[ \begin{array}{lcll}
     C & && \text{control} \\
     C &::=& \eval e & \text{evaluate expression}~e\\
     &\mid& \match{\argstack}{m} & \text{evaluate matching}~m~\text{with arguments}~A 
\end{array} \]

The next step is to make evaluation order explicit 
in a \emph{stack}, which is a list of \emph{continuations} $\kont$:
\[ \begin{array}{lcll}
     \kont &&& \text{continuations} \\
     \kont &::=& y & \text{push argument} \\
           &\mid& !y & \text{push update} \\
           &\mid& \$ & \text{end matching} \\
           &\mid& ?(\argstack,m) & \text{push alternative} \\
           &\mid& @(\argstack,\matchpat{\constr{c}{\vec{p}}}{m}) & \text{push pattern}
   \end{array}
   \]

   A machine configuration is a triple $(\Gamma,C,\retstack)$ of heap,
   control, and return stack.  The initial configuration for
   evaluating $e$ is $(\{\},\, \eval e,\, [\,])$. Evaluation may
   terminate successfully in a configuration
   $(\Gamma, \eval w, [\,])$, get ``stuck'' in a configuration
   $(\Gamma,\, \match{\argstack}{\matchfail},\, \$:\retstack)$ due to
   pattern matching failure, or continue indefinitely (i.e.\@ diverge).

\begin{figure}
  \[
    \begin{array}{rllll} \hline
      & \text{Heap} & \text{Control} & \text{RetStack} & \text{rule} \\ \hline
      & \Gamma & \eval (e~\loc) & \retstack  & \smallrule{App1} \\
      \Longrightarrow & \Gamma & \eval e  & \loc:\retstack    \\[2ex]
      & \Gamma & \eval \lambda m & \loc:\retstack & \smallrule{App2} \\
      \Longrightarrow & \Gamma & \eval \lambda (\matcharg{\loc}{m}) & \retstack\\
      \multicolumn{5}{l}{\text{if}~\arity{m}>0} \\[2ex]
      & \Gamma & \eval \lambda m & \retstack & \smallrule{Sat} \\
      \Longrightarrow & \Gamma & \match{[\,]}{m} & \$:\retstack  \\
      \multicolumn{5}{l}{\text{if}~ \arity{m}=0}\\[2ex]
      & \Gamma[\loc\mapsto e] & \eval \loc & \retstack & \smallrule{Var} \\
      \Longrightarrow & \Gamma & \eval e  & !\loc : \retstack &   \\[2ex]
      & \Gamma  & \eval w  & !\loc : \retstack  & \smallrule{Update}  \\
      \Longrightarrow & \Gamma[\loc\mapsto w] & \eval w   & \retstack & 
    \end{array}
\]
     
  \caption{Abstract machine rules for expressions.}\label{fig:smallstep-1}
\end{figure}

\begin{figure*}
  \[
    \begin{array}{rllll} \hline
      & \text{Heap} & \text{Control} & \text{RetStack} & \text{Rule} \\ \hline

       & \Gamma & \match {\argstack}{\matchreturn{e}} & \retstack &  \smallrule{Return1A} \\
       \Longrightarrow & \Gamma & \match{[\,]}{\matchreturn{\matcharg{A}{e}}} & \retstack & \\
       \multicolumn{5}{l}{\text{if}~\argstack \neq [\,]} \\[2ex]
       & \Gamma & \match {[\,]}{\matchreturn{e}} & \$:\retstack &  \smallrule{Return1B} \\
       \Longrightarrow & \Gamma & \eval{e} & \retstack & \\[2ex]
       & \Gamma & \match {[\,]}{\matchreturn{e}} & (?(\argstack',m)):\retstack & \smallrule{Return2} \\
       \Longrightarrow & \Gamma & \match {[\,]}{\matchreturn{e}} & \retstack\\[2ex]
       & \Gamma & \match {(y:\argstack)} {(\matchpat{x}{m})}  & \retstack & \smallrule{Bind} \\
       \Longrightarrow & \Gamma & \match{\argstack}{m[y/x]}  & \retstack & \\[2ex]
       & \Gamma & \match  {(y:\argstack)}{(\matchpat{\constr{c}{\vec{p}}}{m})}  & \retstack & \smallrule{Cons1} \\
       \Longrightarrow & \Gamma & \eval y  & @(\argstack,\matchpat{\constr{c}{\vec{p}}}{m}):\retstack & \\[2ex]
       & \Gamma & \eval \constr{c}{y_1,\ldots,y_n}  & @(\argstack,\matchpat{\constr{c}{p_1,\ldots,p_n}}{m}):\retstack &  \smallrule{Cons2} \\
       \Longrightarrow & \Gamma & \match{\argstack}{(\matcharg{y_1}{p_1} \amatchpat \ldots \matcharg{y_n}{p_n} \amatchpat m)}  & \retstack \\[2ex]
       & \Gamma & \eval \constr{c'}{y_1,\ldots,y_k} & @(\argstack,\matchpat{\constr{c}{p_1,\ldots,p_n}}{m}):\retstack &  \smallrule{Fail} \\
       \Longrightarrow & \Gamma & \match {[\,]}{\matchfail}  &  \retstack \\
       \multicolumn{5}{l}{\text{if}~ c\neq c'} \\[2ex]
       & \Gamma & \match{\argstack}{(\matcharg{y}{m})} & \retstack & \smallrule{Arg} \\
       \Longrightarrow & \Gamma & \match{(y:\argstack)}{m}  & \retstack \\[2ex]
       & \Gamma & \match {\argstack}{(\matchalt{m_1}{m_2})} & \retstack & \smallrule{Alt1} \\
       \Longrightarrow & \Gamma & \match{\argstack}{m_1}   &?(\argstack,m_2):\retstack & \\[2ex]
   & \Gamma & \match{\argstack'}{\matchfail}  & ?(\argstack,m):\retstack & \smallrule{Alt2} \\
      \Longrightarrow & \Gamma & \match{\argstack}{m}  & \retstack  \\[2ex]
            & \Gamma & \match{A}{(\matchwhere{m}{\{x_i=e_i\}})} & \retstack & \smallrule{Where} \\
      \Longrightarrow & \Gamma[\loc_i\mapsto \widehat{e_i}] & \match{A}{\widehat{m}}   & \retstack \\
      \multicolumn{5}{l}{\text{where}~\loc_i~\text{are fresh and}~ 
      \widehat{e}_i = e_i[\loc_1/x_1,\ldots,\loc_n/x_n]~\text{and}~
      \widehat{m} = m[\loc_1/x_1,\ldots,\loc_n/x_n]}
    \end{array}
\]
  \caption{Abstract machine rules for matchings.}\label{fig:smallstep-2}
\end{figure*}

\subsection{Transitions}

The transitions between configurations are given by rules in
Figures~\ref{fig:smallstep-1} and~\ref{fig:smallstep-2}.
Meta-variables that appear on the left-hand side and right-hand side
are unmodified. For example: rule~\smallrule{App1} applies on any
heap and stack, and does not modify the
heap, but pushes a single continuation onto the stack.

Rules~\smallrule{App1}, \smallrule{Var}, \smallrule{Update}
are identical to the ones in the first version of
Sestoft's abstract machine~\cite{sestof_1997}.

Rule~\smallrule{App2} and \smallrule{Sat} handle application
and matching evaluations, respectively. Note that the
side conditions on $\arity{m}$ ensure at most one
rule applies.
As in the big-step semantics, rule \smallrule{Sat} switches
from evaluating an expression to a
matching, pushing a mark `\$' onto the return stack
to allow checking when no pending alternatives
are available (rule~\smallrule{Return1B}).

Rule~\smallrule{Cons1} switches from evaluating a matching to an
expression in order to perform a pattern match, pushing a continuation
onto the return stack.  Rule~\smallrule{Cons2} and \smallrule{Fail}
handle the successful and unsuccessful matches.
Rules~\smallrule{Alt1} and \smallrule{Alt2} deal with
alternatives. Rule~\smallrule{Arg} pushes arguments on the local
argument stack.  Finally, rule \smallrule{Where} is analogous to
Sestoft's rule for \alet.


\subsection{Consistency Between the Two Semantics}\label{sec:soundness}

Now that we have introduced both a big-step semantics and an abstract
machine, we show they are consistent with each other.
The methodology follows that of Sestoft~\cite{sestof_1997} and uses
induction on the derivations. We present only the definitions and 
main statements; the proofs are  straightforward but tedious.

The first result states that a big-step evaluation corresponds to a
sequence of small-step transitions in the machine.  Because the
evaluation of expressions and matchings are mutually recursive, we must
prove the result for both evaluations simultaneously.  We use
$\Rightarrow^{*}$ for the reflexive and transitive closure of the
small-step transition relation.

\begin{theorem}[$\Downarrow$ implies $\Rightarrow^*$]\label{thm:soundness1}
  If  $\Gamma; \lset; e \expev \Delta; w$
  then, for all $S$,
  $(\Gamma, \eval{e},  \retstack) \Rightarrow^{*}
    (\Delta, \eval{w}, \retstack)$.

  If $\Gamma; \lset; \argstack; m \matev \Delta; \matchresult$
  then, for all $S$,
  $(\Gamma, \match{\argstack}{m}, \retstack) \Rightarrow^{*}
    (\Delta, \match{[\,]}{\matchresult}, \retstack)$
  \end{theorem}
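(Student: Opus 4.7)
The plan is to proceed by simultaneous induction on the two mutually recursive big-step derivations, one for $\expev$ and one for $\matev$. The universal quantification over the return stack $\retstack$ is essential here: when invoking the induction hypothesis on a subderivation, I will systematically choose an extended stack that threads the right continuation through the subcomputation, and uniformity in $\retstack$ lets me do this freely at every level of nesting.

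For almost every big-step rule the simulating machine trace factors into three phases. A \emph{setup} transition performs the top-level small-step move that pushes one continuation onto $\retstack$; the \emph{induction hypothesis} is invoked on each premise under that extended stack; and a \emph{teardown} transition consumes the continuation once IH delivers the advertised post-configuration. Concretely: \bigrule{Var} pairs \smallrule{Var} (pushing $!\loc$) with \smallrule{Update}; \bigrule{App} pairs \smallrule{App1} (pushing $\loc$) with \smallrule{App2} and then appeals to IH a second time on the evaluation of $\lambda(\matcharg{\loc}{m})$; \bigrule{Sat} pairs \smallrule{Sat} (pushing $\$$) with \smallrule{Return1B}, again followed by a second IH call. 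On the matching side, \bigrule{Cons1}/\bigrule{Cons2} use \smallrule{Cons1} to push a pattern frame $@(\argstack,\matchpat{\constr{c}{\vec{p}}}{m})$, run IH on the scrutinee, and then dispatch via \smallrule{Cons2} or \smallrule{Fail}. \bigrule{Alt2} uses \smallrule{Alt1} to push $?(\argstack,m_2)$, runs IH on $m_1$ (which by the theorem's conclusion lands in $\match{[\,]}{\matchfail}$, exactly the shape \smallrule{Alt2} consumes), and then applies IH on $m_2$. The base cases \bigrule{Whnf} and \bigrule{Fail} are zero-step, and \bigrule{Return} requires at most the single \smallrule{Return1A} step to fold a nonempty argument stack into the returned expression.

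I expect the main housekeeping obstacle to be the \bigrule{Where} rule, where both semantics allocate fresh names: one must show that the freshness choice in the big-step derivation can be matched by a machine choice of $\loc_i$ fresh with respect to the \emph{current} $\Gamma$ and $\retstack$, not just with respect to the data appearing in the big-step judgment. Because the theorem is universally quantified in $\retstack$, this is always arrangeable, but it requires an explicit $\alpha$-renaming step and a small lemma that substitution commutes with machine transitions. A second, smaller subtlety is that \bigrule{Fail} permits returning $\matchfail$ under an arbitrary stack $A$, while the machine only ever produces $\match{[\,]}{\matchfail}$; the arity well-formedness convention on matchings ensures this discrepancy never actually surfaces in a derivation reachable from well-formed inputs, which keeps the two sides aligned. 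Once these points are discharged, the remaining cases are verified by routine chaining of small-step transitions.
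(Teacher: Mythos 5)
Your proposal takes exactly the paper's approach: the paper proves this by simultaneous induction on the height of the two mutually recursive derivations of $\expev$ and $\matev$ and leaves the case analysis implicit, which your setup/IH/teardown pairing of big-step rules with machine transitions correctly fleshes out. The two subtleties you flag (aligning fresh-name choices in \bigrule{Where} with an arbitrary $\retstack$, and the fact that \bigrule{Fail} can return $\matchfail$ under a nonempty argument stack while the theorem's conclusion normalizes to $\match{[\,]}{\matchfail}$) are genuine, but the paper's one-sentence proof does not address them either.
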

  \begin{proof}
    This result can be proved by simultaneous induction on the height
    of the derivations of \expev\ and \matev.
\end{proof}

The second result establishes that the small-step semantics
derives only evaluations corresponding to the
big-step semantics, provided we restrict ourselves
to \emph{balanced} evaluations.

\begin{definition}
  A sequence of evaluation steps
  $(\Gamma,C,\retstack) \Rightarrow^{*}
  (\Delta,C',\retstack)$ is \emph{balanced} if
  the initial and final stacks are identical and every intermediate
  stack is of the form $S' = \kont_1:\kont_2:\ldots:\kont_n:S$, i.e.\@
  an extension of the initial stack.
\end{definition}

\begin{definition}
  The \emph{trace} of a sequence of small-step evaluation
  steps $(\Gamma,C,\retstack) \Rightarrow^{*}
  (\Delta,C',\retstack')$ is the sequence of rules used in the evaluation.
\end{definition}

\begin{definition}
  Let $\upd{\retstack}$
  be the set of locations marked
  for updates in a stack \retstack, i.e.\@
  $\upd{\retstack} = \{ y ~:~ (!y) \in \retstack \}$.
\end{definition}

\begin{theorem}[$\Rightarrow^*$ implies $\Downarrow$] \label{thm:soundness2}
  If $(\Gamma, \eval{e}, \retstack) \Rightarrow^{*} (\Delta, \eval{\whnf},
  \retstack)$ is a balanced expression evaluation then
  $\Gamma;\upd{\retstack}; e \expev \Delta; w$.

  If $(\Gamma, \match{A}{m}, \retstack) \Rightarrow^{*} (\Delta,
  \match{[\,]}{\matchresult}, \retstack)$ is a balanced matching evaluation then
  $\Gamma;\upd{\retstack}; A;m \matev \Delta;\matchresult$
\end{theorem}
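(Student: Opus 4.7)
The plan is to prove both clauses simultaneously by induction on the length $n$ of the small-step derivation, closely following the methodology Sestoft uses for his Mark 1 machine but extended to accommodate the additional match mode and the continuations for patterns and alternatives.

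For the base case ($n=0$) the configurations coincide: in the expression clause $e$ is already a weak head normal form equal to $w$, and \bigrule{Whnf} discharges the obligation; in the matching clause $\argstack=[\,]$ and $m$ is already $\matchreturn{e}$ or $\matchfail$, handled by \bigrule{Return} or \bigrule{Fail}.

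For the inductive step I case-split on the first transition $(\Gamma,C,\retstack)\Rightarrow(\Gamma_1,C_1,\retstack_1)$. The central technique is the following: whenever the rule pushes a new continuation $\kont$ onto $\retstack$, yielding $\retstack_1=\kont:\retstack$, I locate the smallest index $i$ at which the stack first returns to $\retstack$. Such an index exists because the final stack equals $\retstack$, and by balancedness every intermediate stack up to step $i$ extends $\kont:\retstack$. Consequently the prefix from step $1$ to step $i$ is itself balanced with stack $\kont:\retstack$, the induction hypothesis applies to it, and the configuration at step $i$ is forced by the only rules that can pop $\kont$. For instance, in the \smallrule{App1} case this sub-derivation must terminate at some $(\Delta',\eval{\lambda m},\loc:\retstack)$ consumed by \smallrule{App2}, and two applications of the IH combined via \bigrule{App} rebuild the big-step derivation, using $\upd{\loc:\retstack}=\upd{\retstack}$. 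The \smallrule{Var}/\smallrule{Update} pair is analogous, but here $\upd{!\loc:\retstack}=\upd{\retstack}\cup\{\loc\}$, which matches exactly the extension of $\lset$ in \bigrule{Var}. The \smallrule{Sat} rule pushes a \$-marker; the matching sub-derivation ends in $\match{[\,]}{\matchreturn{e}}$ consumed by \smallrule{Return1B}, then continued evaluation of $e$ reassembles \bigrule{Sat} via two IH invocations. Cases that do not modify the stack (\smallrule{Bind}, \smallrule{Arg}, \smallrule{Where}, \smallrule{Return1A}) are discharged by a single IH on the remainder plus the corresponding big-step rule.

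The main obstacle will be the mode-switching cases for patterns and alternatives. For \smallrule{Cons1} the derivation leaves match mode, pushes an $@$-continuation, and re-enters eval mode; the balanced sub-derivation must terminate at some $(\Delta',\eval{\constr{c'}{\vec{y}}},@(\ldots):\retstack)$ triggering either \smallrule{Cons2} (yielding \bigrule{Cons1}) or \smallrule{Fail} (yielding \bigrule{Cons2}), which uses the fact that any balanced eval-mode segment whose final step pops the stack must leave the machine with a weak head normal form. For \smallrule{Alt1} the $?$-continuation is popped either by \smallrule{Return2} (after $m_1$ succeeds) producing \bigrule{Alt1}, or by \smallrule{Alt2} (after $m_1$ fails) producing \bigrule{Alt2}; the subtlety here is that the intermediate balanced segment terminates in a match-mode configuration rather than an eval-mode one. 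Verifying that these decompositions are clean, and that $\upd{\retstack}$ propagates unchanged across the \$, $@$, and $?$ continuations (none of which contribute update markers), is the delicate part of the argument; once established, the remaining reconstruction of big-step derivations from Figures~\ref{fig:expr-eval} and~\ref{fig:match-eval} is mechanical.
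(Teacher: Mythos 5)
Your proposal is correct and follows essentially the same route as the paper's (one-line) proof: induction on the length of balanced small-step derivations in Sestoft's style, splitting the trace at the first return to the initial stack so that each pushed continuation delimits a strictly shorter balanced sub-trace to which the induction hypothesis applies, and checking that $\upd{\cdot}$ records exactly the black-holed variables. The one detail worth making explicit is the \smallrule{Alt2} case, where the failing sub-trace for $m_1$ may terminate in $\match{\argstack'}{\matchfail}$ with $\argstack'$ nonempty, so the matching clause of your induction hypothesis must be stated for (or trivially extended to) that terminal shape as well --- harmless, since \bigrule{Fail} accepts an arbitrary argument stack.
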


\begin{proof}
The proof is by induction on the height of balanced derivations.
\end{proof}

As a direct consequence of Theorem~\ref{thm:soundness2}, we conclude
that an abstract machine evaluation starting with an empty stack must
correspond to a successful big-step evaluation, with identical
resulting whnf and heap.

\begin{corollary}
  If $(\Gamma, \eval{e}, [\,]) \Rightarrow^* (\Delta, \eval{\whnf}, [\,])$
  then $\Gamma; \emptyset; e \expev \Delta; \whnf$.
\end{corollary}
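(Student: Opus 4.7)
The plan is to apply Theorem~\ref{thm:soundness2} directly, instantiating the return stack $\retstack$ with the empty stack $[\,]$. Before invoking it, I need to check the two hypotheses the theorem requires: that the given sequence is a balanced evaluation, and that the resulting side environment $\upd{\retstack}$ coincides with $\emptyset$.

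For balancedness, recall that a sequence is balanced when the initial and final stacks are identical and every intermediate stack is an extension of the initial stack. Here the initial and final stacks are both $[\,]$, so the first condition is immediate. For the second condition, any stack whatsoever has the form $\kont_1 : \kont_2 : \ldots : \kont_n : [\,]$ for some $n \geq 0$, so every intermediate configuration trivially extends the empty stack. Thus balancedness comes for free when we start and end with an empty return stack, and Theorem~\ref{thm:soundness2} applies.

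The remaining step is to simplify the conclusion $\Gamma; \upd{[\,]}; e \expev \Delta; \whnf$ produced by Theorem~\ref{thm:soundness2}. By the definition of $\upd{\cdot}$, we have $\upd{[\,]} = \{ y : (!y) \in [\,]\} = \emptyset$, so this conclusion is exactly $\Gamma; \emptyset; e \expev \Delta; \whnf$, which is what we wanted.

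I do not anticipate any real obstacle: the corollary is essentially a reading of Theorem~\ref{thm:soundness2} at its simplest instance, where the outer context imposes no pending arguments, no updates, and no alternatives. The only point that deserves a brief sanity check is that the balancedness condition is genuinely vacuous for $\retstack = [\,]$ — which it is, because extending the empty stack is a condition on every stack.
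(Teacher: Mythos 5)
Your proposal is correct and matches the paper's own argument, which simply instantiates the first statement of Theorem~\ref{thm:soundness2} with $\retstack=[\,]$. You merely spell out the two details the paper leaves implicit—that balancedness is vacuous for an empty initial stack and that $\upd{[\,]}=\emptyset$—both of which are checked correctly.
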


\begin{proof} The result follows from the first statement of
  Theorem~\ref{thm:soundness2} when $S=[\,]$.
\end{proof}


\section{Implementation}\label{sec:implementation}

\subsection{Overview}
\begin{figure*}
\begin{verbatim}
insert x [] = [x]
insert x (y:ys) | x<=y = x:y:ys
                | otherwise = y:insert x ys
\end{verbatim}

  \[ insert =
    \lambda(\!\!
    \begin{array}[t]{l}
     \matchpat{x}{\matchpat{[\,]}{\matchreturn{x:[\,], \remark{"insert x [] = [x]"}}}} \\
      \amatchalt \matchpat{x}{\matchpat{(y:ys)}{\matcharg{x\leq y}{\matchpat{\textsf{True}}{\matchreturn{x:y:ys, \remark{"insert x (y:ys) | x<=y = x:y:ys"}}}}}} \\
      \amatchalt \matchpat{x}{\matchpat{(y:ys)}{\matchreturn{y:insert~x~ys, \remark{"insert x (y:ys) | otherwise = y:insert x ys"}}}})
      \end{array}
    \]
\caption{Annotated translation of insert into \lambdaPMC.}\label{fig:annotated-insert}
\end{figure*}
The abstract machine derived in Section~\ref{sec:smallstep} forms the
basis for \emph{Haskelite}, a web-based interpreter for a small subset
of Haskell.  The interpreter is written in \emph{Elm}\footnote{\url{https://elm-lang.org}} and consists of:
\begin{enumerate}
\item a parser for translating a subset of Haskell into \lambdaPMC\
  (written using Elm's parser combinator library); 
\item a simple type checker based on the Hindley-Milner system;
\item an abstract machine based on the semantics of Section~\ref{sec:smallstep};
\item pretty-printing code for expressions and the evaluation state;
\item a subset of the Haskell Prelude,
  implemented in the Haskelite language itself and bundled with the interpreter.
\end{enumerate}
The interpreter compiles to JavaScript and runs from an
HTML page entirely on the client side (i.e.\@ the web browser).  We
also use an open-source JavaScript editor supporting syntax
highlighting.\footnote{\url{https://codemirror.net/}} The application
requires little resources by today's web standards and runs on
computers, tablets, or even smartphones. The minimized JavaScript file
for the interpreter (excluding the text editor) is around 100~KB.  On
a selection of typical functional programming introductory examples
the web page consumes between 5--30~MB of memory when
running.\footnote{Measured using Firefox's web developer's tools on an
  Ubuntu Linux AMD64 PC.}

The evaluator follows the rules of Figures~\ref{fig:smallstep-1}
and~\ref{fig:smallstep-2} using purely functional data structures for
stacks and heap.  Since evaluations are likely to be of small
examples, there is no garbage collector.\footnote{Also,
  because the interpreter keeps the evaluation history, a garbage
  collector would probably reclaim less free space than a real
  implementation.}  We have also added primitive values and operations
over integers and characters; these straightforward to implement, so we
omit the details in this paper.

The normalization of expressions and matchings
(Section~\ref{sec:normalized}) is done by introducing indirection
bindings for arguments that are not atomic.
Indirection bindings are short-circuited during pretty-printing, so
that the students are unaware of the normalization; this allows
presenting evaluation as if it were performed on an expression rather
than a graph.\footnote{To ensure termination for
  cyclic structures, we limit the short-circuiting to the first
  traversal of a cycle.}

\subsection{Producing Traces}
We avoid showing all transition steps of the abstract machine for two reasons:
\begin{itemize}
\item the transition steps are too fine-grained, resulting in many
  uninteresting intermediate configurations, e.g.\@
  after a partial application (rules \smallrule{App1} and \smallrule{App2});
\item we do not want to expose the details of \lambdaPMC\ to
  students, in particular, the evaluations of matchings which
  cannot be easily translated back into Haskell.
\end{itemize}
Our solution is to show only configurations after two
kinds of transitions:
\begin{enumerate}
\item the evaluation of primitive operations (e.g.\@ integers); and
\item matching evaluations that return an expression
  (rule \smallrule{Return1B}).
\end{enumerate}
Obtaining equations for primitive operations is trivial (e.g.\@
\texttt{2 * 3 = 6}).  For matching evaluations, we simply collect the
equations from the Haskell source during parsing and annotate the
\lambdaPMC\ abstract syntax.  Figure~\ref{fig:annotated-insert} shows
an example translation of \texttt{insert} into annotated \lambdaPMC.

To show configurations we transform the return stack into an
evaluation context around the current expression.  When the return
stack contains pending matchings (i.e.\@ during the evaluation of
guards), we simply hide the remaining continuations and show only the
stack depth as a sequence of dots (see Figures~\ref{fig:eval-insert}
and~\ref{fig:ins-sort}).  This simple solution highlights the current
evaluation context while also limiting the size of the visualized
expressions.

\subsection{Evaluating Constructors to Normal Form}
The semantics of Sections~\ref{sec:bigstep} and~\ref{sec:smallstep}
evaluate only to weak head normal form, i.e.\@ the outermost data
constructor.  In a read-eval-print loop such as GHCi, full evaluation
is done by converting the result into a string.

In Haskelite we implement full evaluation as a function \texttt{force
  :: a -> a} that finds the next outermost redex in a constructor and
continues evaluation. Unlike the Haskell function of the same name,
\texttt{force} is built-in rather than defined in a type class (which
we do not support anyway).

Forcing is implicitly done for the outermost expression under
evaluation; this works even with infinite data structures, e.g.\@
lists, showing evaluation step-by-step.
Furthermore, \texttt{force} can also be explicitly called to
force full evaluation of (finite) intermediate data structures for pedagogical
reasons.

\subsection{Bang Patterns}
We have added GHC's \emph{bang patterns}
extension~\cite{ghc_guide_bang_patterns} to our implementation:
matching a pattern of the form $!x$ succeeds binding a variable $x$
with any value, provided we can reduce it to whnf before continuing.

The changes to the abstract machine consist of just two new rules:
\[ \begin{array}{rlll}
        \Gamma & \match {(y:\argstack)} {(\matchpat{!x}{m})}  & \retstack & \smallrule{Bang1} \\
       \Longrightarrow~ \Gamma & \eval{y}  & !(\argstack,m[y/x]):\retstack & \\[2ex]
        \Gamma & \eval{w}  & !(\argstack,m):\retstack & \smallrule{Bang2} \\
     \Longrightarrow~ \Gamma & \match{\argstack}{m} & \retstack
   \end{array} \]
   
The continuation $!(\argstack,m[y/x])$ records what to do after 
evaluation of $y$.  We handle the binding of $x$ to
$y$ in rule \smallrule{Bang1}; rule~\smallrule{Bang2} applies after
the evaluation of $y$ to whnf (and its result is updated in the heap);
hence, we can simply ignore the whnf $w$ and evaluate the matching
$m$.

\begin{figure*}
\begin{minipage}{0.5\textwidth}
\begin{Verbatim}[commandchars=\\\{\},commentchar=\%]
  foldl (*) 1 [2, 3, 4]
  \remark{\{ foldl f z (x:xs) = foldl f (f z x) xs \}}
= foldl (*) (1 * 2) [3, 4]
  \remark{\{ foldl f z (x:xs) = foldl f (f z x) xs \}}
= foldl (*) ((1 * 2) * 3) [4]
  \remark{\{ foldl f z (x:xs) = foldl f (f z x) xs \}}
= foldl (*) (((1 * 2) * 3) * 4) []
  \remark{\{ foldl f z [] = z \}}
= ((1 * 2) * 3) * 4
  \remark{\{ 1 * 2 = 2 \}}
= (2 * 3) * 4
  \remark{\{ 2 * 3 = 6 \}}
= 6 * 4
  \remark{\{ 6 * 4 = 24 \}}
= 24
\end{Verbatim}
\end{minipage}
\begin{minipage}{0.45\textwidth}
\begin{Verbatim}[commandchars=\\\{\},commentchar=\%]
  foldl' (*) 1 [2, 3, 4]
  \remark{\{ foldl' f !z (x:xs) = foldl' f (f z x) xs \}}
= foldl' (*) (1 * 2) [3, 4]
  \remark{\{ 1 * 2 = 2 \}}
= .... 2
  \remark{\{ foldl' f !z (x:xs) = foldl' f (f z x) xs \}}
= foldl' (*) (2 * 3) [4]
  \remark{\{ 2 * 3 = 6 \}}
= .... 6
  \remark{\{ foldl' f !z (x:xs) = foldl' f (f z x) xs \}}
= foldl' (*) (6 * 4) []
  \remark{\{ 6 * 4 = 24 \}}
= .... 24
  \remark{\{ foldl' f !z [] = z \}}
= 24
\end{Verbatim}
\end{minipage}
\caption{Evaluation traces for \texttt{foldl} and \texttt{foldl'}.}
\label{fig:eval-foldl}
\end{figure*}

Bang patterns can be used to define a variant of the \texttt{foldl}
with a strict accumulator:
\begin{verbatim}
foldl' f z [] = z
foldl' f !z (x:xs) = foldl' f (f z x) xs
\end{verbatim}
Figure~\ref{fig:eval-foldl} show the traces for \verb|foldl| and
\verb|fold'| applied to the same arguments.  It is immediate that
\verb|foldl| accumulates an unevaluated expression while \verb|foldl'|
reduces the accumulator at each recursion step, thus avoiding a space
leak.

One pitfall that programmers need to be aware of is that \verb|foldl'|
forces evaluation of the accumulator \emph{only} to whnf; this means
that we can still get space leaks with lazy data types, e.g.\@ tuples.
Consider the function \verb|sumcount| that
computes the pair with the length and sum of a list of numbers:
\begin{verbatim}
sumcount = foldl' step (0,0)
step (n,s) x = (1+n,s+x)
\end{verbatim}
Although we are using the strict \verb|foldl'|, the accumulator is a
(lazy) pair, hence \texttt{sumcount} still exhibits space leaks.
This can solved by adding strictness
annotations in the components of the pair~\cite{RWH}:
\begin{verbatim}
sumcount = foldl' step (0,0)
step (!n,!s) x = (1+n,s+x)
\end{verbatim}
Figure~\ref{fig:sumcount} shows the evaluations traces for both
versions; observing the trace on the right, we can see that the
components of the pair are evaluated at each folding step, whereas the
trace on the left builds up two expressions that are evaluated after
the fold.  Hence, Haskelite traces can also be used to \emph{explain
  operational issues of lazy evaluation at a high-level}
without having to explain implementation concepts such as
thunks or examining GHC's Core or STG output.

\begin{figure*}
  \begin{minipage}{0.55\textwidth}
\begin{Verbatim}[commandchars=\\\{\},commentchar=\%]
  sumcount [1, 2, 3]
  \remark{\{ sumcount = foldl' step (0,0) \}}
= foldl' step (0, 0) [1, 2, 3]
  \remark{\{ foldl' f !z (x:xs) = foldl' f (f z x) xs \}}
= foldl' step (step (0, 0) 1) [2, 3]
  \remark{\{ step (n,s) x = (1+n,x+s) \}}
= ... (1 + 0, 1 + 0)
  \remark{\{ foldl' f !z (x:xs) = foldl' f (f z x) xs \}}
= foldl' step (step (1 + 0, 1 + 0) 2) [3]
  \remark{\{ step (n,s) x = (1+n,x+s) \}}
= ... (1 + (1 + 0), 2 + (1 + 0))
  \remark{\{ foldl' f !z (x:xs) = foldl' f (f z x) xs \}}
= foldl' step (step (1 + (1 + 0), 2 + (1 + 0)) 3) []
  \remark{\{ foldl' f z [] = z \}}
= step (1 + (1 + 0), 2 + (1 + 0)) 3
  \remark{\{ step (n,s) x = (1+n,x+s) \}}
= (1 + (1 + (1 + 0)), 3 + (2 + (1 + 0)))
  \remark{\{ 1 + 0 = 1 \}}
= (1 + (1 + 1), 3 + (2 + (1 + 0)))
  \remark{\{ 1 + 1 = 2 \}}
= (1 + 2, 3 + (2 + (1 + 0)))
  \remark{\{ 1 + 2 = 3 \}}
= (3, 3 + (2 + (1 + 0)))
  \remark{\{ 1 + 0 = 1 \}}
= (3, 3 + (2 + 1))
  \remark{\{ 2 + 1 = 3 \}}
= (3, 3 + 3)
  \remark{\{ 3 + 3 = 6 \}}
= (3, 6)
\end{Verbatim}
  \end{minipage}  
\begin{minipage}{0.4\textwidth}
\begin{Verbatim}[commandchars=\\\{\},commentchar=\%]
  sumcount [1, 2, 3]
  \remark{\{ sumcount = foldl' step (0,0) \}}
= foldl' step (0, 0) [1, 2, 3]
  \remark{\{ foldl' f !z (x:xs) = foldl' f (f z x) xs \}}
= foldl' step (step (0, 0) 1) [2, 3]
  \remark{\{ step (!n,!s) x = (1+n,x+s) \}}
= ... (1 + 0, 1 + 0)
  \remark{\{ foldl' f !z (x:xs) = foldl' f (f z x) xs \}}
= foldl' step (step (1 + 0, 1 + 0) 2) [3]
  \remark{\{ 1 + 0 = 1 \}}
= ....... 1
  \remark{\{ 1 + 0 = 1 \}}
= ....... 1
  \remark{\{ step (!n,!s) x = (1+n,x+s) \}}
= ... (1 + 1, 2 + 1)
  \remark{\{ foldl' f !z (x:xs) = foldl' f (f z x) xs \}}
= foldl' step (step (1 + 1, 2 + 1) 3) []
  \remark{\{ foldl' f z [] = z \}}
= step (1 + 1, 2 + 1) 3
  \remark{\{ 1 + 1 = 2 \}}
= ..... 2
  \remark{\{ 2 + 1 = 3 \}}
= ..... 3
  \remark{\{ step (!n,!s) x = (1+n,x+s) \}}
= (1 + 2, 3 + 3)
  \remark{\{ 1 + 2 = 3 \}}
= (3, 3 + 3)
  \remark{\{ 3 + 3 = 6 \}}
= (3, 6)
\end{Verbatim}
  \end{minipage}
  \caption{Evaluation traces for two versions of \texttt{sumcount}.}\label{fig:sumcount}
\end{figure*}




\section{Related Work}\label{sec:related}

The basis for our work is the pattern-matching calculus of
Kahl~\cite{kahl_2004}.  The principle differences are: (1) we add a
\awhere\ binding construct to name sub-expressions; (2) we define an
operational semantics and normal forms for a call-by-need evaluation
strategy; (3) we do not consider the ``empty expression''
corresponding to a pattern matching failure $\lambda\matchfail$; there
is simply no evaluation in such cases.  The definitions of matching
arity and whnf's in Section~\ref{sec:bigstep} are also novel.

Chapter 4 of the classic textbook by Peyton~Jones~\cite{spj_1987}
defines the semantics of pattern matching using lambda abstractions
with patterns $\lambda p.E$ together with a \textsf{FAIL} expression
and a ``fatbar'' operator. This semantics is
denotational and serves primarily as the basis for defining the
correctness of compilation into case expressions presented in
the subsequent chapter.

Mitchell and Runciman have proposed the \emph{Catch} static analyser
for pattern matching safety in Haskell~\cite{mitchell_runciman_2005,
  mitchell_runciman_2008}, namely checking that pattern matchings are
\emph{exhaustive} and \emph{non-redundant}.  The analysis is defined
by translating the source program into a first-order core language
with only simple case expressions solving a set of generated
constraints.  This work is therefore orthogonal to the
contribution presented in this paper.

\emph{Hat} is a suite of tools for transforming Haskell~98 programs to
generate runtime traces and inspecting the resulting
traces~\cite{chitil_runciman_wallace_2003,hat_site} in various ways,
including for debugging or program comprehension.  Unlike Haskelite,
Hat supports full Haskell~98 and produces traces as an output file, which
works only for terminating programs (or at least programs that were
interrupted by the user).  The \emph{hat-observe} tool allows
inspecting the arguments and results of top-level functions, similarly
to the traces Haskelite produces. However, it does not highlight which
equations were used.

There are many teaching tools for the evaluation of computer programs.
One of the most popular is \emph{Python tutor}, a website
that allows visualizing the execution of Python, JavaScript, C, C++ and
Java programs~\cite{guo_2013,python_tutor}. The computational model is
strictly imperative: the program state is visualized as a pointer to the
current instruction and the current values of variables in scope.

In the functional programming community there is a long tradition of
teaching languages based on Scheme;
\emph{DrScheme} (now \emph{DrRacket}) is an IDE for
programming used for teaching which includes a graphical debugger.
This allows setting breakpoints, inspecting variables and step-by-step
execution. There is even a web-based version~\cite{WeScheme}.
However, Scheme does not encourage reasoning by pattern-matching
and equations that we are interested in teaching~\cite{wadler_1987_2}.

\emph{Hazel} is a live-programming environment based on a dialect of
the Elm language designed for teaching~\cite{HazelnutLive19}. It is
part of a larger research goal on contextual editors based on
\emph{typed holes}, where editing states of incomplete programs are
given a formal meaning~\cite{OmarVHSGAH17}; the subsequent
publication~\cite{Hazel_OOPSLA2023} presents a small-step semantics
for evaluation which preserves the ordering among possibly overlapping
patterns (this is necessary to be able to give a semantics to a
program with pattern holes).  However, the Hazel language is strict
rather than lazy, which simplifies the operational semantics but also
means that the technique does not transfer to the lazy setting.

More closely related to our work, Olmer et al.\@ have developed a
step-by-step evaluator for a subset of Haskell used in a tutoring
environment~\cite{olmer_evaluating_2014}.  The evaluator is based on a
general tool for defining rewriting systems and supports different
strategies (innermost or outermost); it can also be used to check
student's traces against an expected strategy.  However, unlike our
operational semantics and interpreter, it does not implement a lazy
evaluation strategy and does not support guards. Additionally, students
are unable to provide their own function definitions.


\section{Conclusion and Further Work}\label{sec:conclusion}

This paper presented a tracing
interpreter based on \lambdaPMC, a small lazy functional language
based on a pattern matching calculus together with two operational
semantics: a big-step semantics in the style of Launchbury and an
abstract machine in the style of Sestoft.  The principal contribution
of \lambdaPMC\ is that translating a source language such as Haskell
is more direct than with case expressions: each equation corresponds
to one alternative in a matching abstraction.  Furthermore, we have
shown that \lambdaPMC\ can seamlessly handle extensions such bang
patterns.  Finally, we have described an implementation of a tracing
interpreter based on this abstract machine
used for teaching a subset of Haskell.

A number of directions for extending this work are possible:

\paragraph{Binding patterns}
\lambdaPMC\ does not allow
patterns to occur in let or where bindings.
It may appear that we could re-use matchings
to translate a pattern binding:
\begin{equation}
  \llet{p=e_1}{e_2} ~\equiv~
    \lambda (e_1 \amatcharg (p \amatchpat e_2))  
  \end{equation}
  However, the above translation would not work if the binding is
  recursive (i.e.\@ the pattern variables of $p$ occur in $e_1$).
  Moreover, it does not respect the Haskell semantics: pattern
  bindings should introduce \emph{irrefutable
    patterns}~\cite{haskell_2010_report}; this means that the matching
  should be delayed until the pattern variables are used.  One
  approach for dealing with this is to employ the translation of
  irrefutable patterns into simple patterns as defined in the Haskell
  report.  Alternatively, we could investigate adding irrefutable
  patterns explicitly to \lambdaPMC.

  \paragraph{List comprehensions}
  We have not included support for list comprehensions in our
  interpreter; this is because the semantics for comprehensions is
  defined by translation into higher-order
  functions~\cite{haskell_2010_report} instead of an equational
  theory.  One alternative would be to perform the translations as part
  of the rewriting steps. However, this may be confusing to students, as
  comprehensions are sometimes introduced before higher-order
  functions~\cite{hutton_2016}.

  \paragraph{I/O and effects}
  We have only so far considered tracing purely functional programs.
  However, we do cover I/O in our introductory functional programming
  course.
  It would also be interesting to able to show
  evaluation of I/O code. One of the challenges is to balance
  simplicity with abstraction: should we de-sugar do-notation into
  monadic operations or keep it as a kind of pseudo-imperative layer?
  More experimentation is required to find the most effective
  approaches.
  
  \paragraph{Improving the interpreter UI} 
  The user interface in our implementation is currently rather basic;
  in particular, there are no mechanisms for skipping evaluation
  steps, controlling the granularity of evaluations or the depth of
  pretty-printing.  Being able to control evaluation directly from the
  UI might also be useful (for example, introducing \texttt{force} for
  strict evaluation).  Additional feedback from students could be
  helpful here.

  \paragraph{Educational assessment}
We are currently teaching an introductory functional programming
course where students are using Haskelite while learning.
Feedback from this experiment will be helpful to understand how
useful is the interpreter and possible improvements
to be made.


\bibliography{bibliography}
\bibliographystyle{plainnat}

\end{document}